\newcommand{\p}{\boldsymbol{\Pi}}
\newcommand{\x}{\boldsymbol{x}}
\newcommand{\U}{\boldsymbol{u}}
\newcommand{\Z}{\boldsymbol{z}}
\newcommand{\Th}{\boldsymbol{\vartheta}}
\title[Microcanonical Langevin Monte Carlo]{Fluctuation without dissipation: Microcanonical Langevin Monte Carlo}
\author{
\Name{Jakob Robnik} \Email{jakob\_robnik@berkeley.edu} \\
\addr Department of Physics, University of California, Berkeley, CA 94720, USA
\AND
\Name{Uro\v{s} Seljak} \Email{useljak@berkeley.edu} \\
\addr Department of Physics, University of California, Berkeley, CA 94720, USA \\
Lawrence Berkeley National Laboratory, 1 Cyclotron Road, Berkeley, CA 93720, USA
}
\begin{document}

\maketitle

\begin{abstract}
Stochastic sampling algorithms such as Langevin Monte Carlo are inspired by physical systems in a heat bath. Their equilibrium distribution is the canonical ensemble given by a prescribed target distribution, so they must balance fluctuation and dissipation as dictated by the fluctuation-dissipation theorem. We show that the fluctuation-dissipation theorem is not required because only the configuration space distribution, and not the full phase space distribution, needs to be canonical. We propose a continuous-time Microcanonical Langevin Monte Carlo (MCLMC) as a dissipation-free system of stochastic differential equations (SDE). We derive the corresponding Fokker-Planck equation and show that the stationary distribution is the microcanonical ensemble with the desired canonical distribution on configuration space. We prove that MCLMC is ergodic for any nonzero amount of stochasticity, and for smooth, convex potentials, the expectation values converge exponentially fast. Furthermore, the deterministic drift and the stochastic diffusion separately preserve the stationary distribution. This uncommon property is attractive for practical implementations as it implies that the drift-diffusion discretization schemes are bias-free, so the only source of bias is the discretization of the deterministic dynamics. We apply MCLMC to a $\phi^4$ model on a 2d lattice, where Hamiltonian Monte Carlo (HMC) is currently the state-of-the-art integrator. MCLMC converges 12 to 32 times faster than HMC on an $8\times8$ to $64\times64$ lattice, and we expect even higher improvements for larger lattice sizes, such as in large scale lattice quantum chromodynamics.
\end{abstract}


\section{Introduction}

Sampling from a known probability distribution $e^{-S(\x)} / Z$ with a possibly unknown normalization constant $Z$ is an important problem in many scientific disciplines, ranging from Bayesian statistics to statistical physics and quantum field theory. If $\x$ is high-dimensional, methods that use the gradient $\nabla S(\x)$ are vastly more efficient than gradient-free MCMC, such as random walk Metropolis-Hastings \citep{Metropolis}. 

Golden standard gradient-based methods are Hamiltonian Monte Carlo (HMC) \citep{HMCDuane} and (underdamped) Langevin Monte Carlo (LMC) (see e.g. \citep{MolecularDynamics}). Both are based on the physics of a particle with position $\x(t)$, momentum $\p(t)$, moving in an external potential $S(\x)$.
The dynamics is encoded in the Hamiltonian function $H(\x, \, \p) = \frac{1}{2} \vert \p \vert^2 + S(\x)$ which gives rise to the Hamiltonian equations, a deterministic system of ordinary differential equations (ODE) for the phase space variables $\Z=(\x, \p)$. 
HMC adds on top of that the occasional momentum resampling. Langevin dynamics on the other hand, additionally models microscopic collisions with a heat bath by introducing damping and the diffusion process, giving rise to a set of Stochastic Differential Equations (SDE). Damping and diffusion are tied together by the fluctuation-dissipation theorem, ensuring that the probability distribution $\rho_{t}(\Z)$ of finding the particle at location $\Z$ in the phase space converges to $\exp[-H(\Z)]$, which is known as the canonical ensemble. The marginal configuration space distribution is then $\rho(\x) \propto \exp[- S(\x)]$, i.e. the distribution that we wanted to sample from.



An interesting question is
what is the complete framework of possible ODE/SDE
whose equilibrium solution corresponds to the
target density $\rho(\x) \propto \exp[-S(\x)]$. 
It has been argued \citep{Ma15} that the complete framework is 
given by a general form of the drift term $B(\Z)=[D(\Z)+Q(\Z)]\nabla H(\Z)+\Gamma(\Z)$, where $H(\Z)$
is the Hamiltonian, $D(\Z)$ is 
positive definite diffusion matrix and $Q(\Z)$ is skew-symmetric matrix. 
$\Gamma(\Z)$ is specified by derivatives
of $D(\Z)$ and $Q(\Z)$. 
This framework implicitly assumes 
that the equilibrium 
distribution is canonical on the phase space, $\rho(\Z) \propto \exp[-H(\Z)]$.
In general, however, 
we only need to require the marginal $\x$ distribution
to be canonical, $\rho(\x) \propto \exp[-S(\x)]$, giving
rise to the possibility of additional formulations for which the
stationary distribution matches the target distribution, but without the phase space distribution
being canonical.
One such general class of models is the 
Microcanonical Hamiltonian Monte Carlo \citep{MCHMC} (MCHMC), 
where the 
energy is conserved throughout 
the process, and a suitable choice of the Hamiltonian enforces the correct marginal configuration space distribution.
MCHMC extends the deterministic
dynamics \citep{ESH} by adding
momentum resampling, which
is necessary for ergodicity. 
In this work, we will study the variable-mass MCHMC dynamics, which after rescaling time (coinciding with the inverse Sundman transformation \citep{whatMakesMDwork,SimulatingHamiltonianDynamics})
is the same dynamics as the isokinetic sampler \citep{isokineticOld, isokinetic0}. In this paper, we explore the continuous-time limit, with
and without diffusion. In Section \ref{sec: deterministic} we derive the Liouville equation for continuous deterministic
dynamics directly from the ODE, and show
that its stationary solution is the target distribution. 
However, the deterministic dynamics of \cite{isokinetic0,ESH} is not generically ergodic, even if additional variables are introduced as in \cite{isokinetic1, isokinetic2}, and even if it were, the convergence to equilibrium can be slow \citep{MCHMC}.
A possible solution is stochastic isokinetic dynamics \citep{isokineticLangevin}, where the original phase space variables are coupled with $2d$ additional variables whose dynamics is stochastic, such that ergodicity can be rigorously established. However, the deterministic limit of this model no longer equals the deterministic isokinetic sampler and the variable mass MCHMC. In fact, it lives on a 3d-dimensional manifold, not 2d-1 dimensional manifold. With this approach, three additional hyperparameters are introduced, two masses and a damping parameter.

In section \ref{sec: stochastic}, we instead propose Microcanonical Langevin Monte Carlo (MCLMC), which directly adds stochasticity to the acceleration of the deterministic dynamics. This is a continuous-time analog of partially stochastically reorienting the velocity direction after every step of the deterministic dynamics, and is easy to integrate. MCLMC introduces one additional free parameter, the strength of the stochastic perturbation, whose optimal value can be determined by a short prerun \citep{MCHMC} and is anologus to the bounce rate in MCHMC.
In contrast to the standard Langevin dynamics and stochastic isokinetic dynamics, the energy 
conservation in MCLMC leads to dynamics that
does not have a velocity 
damping term associated with 
the stochastic term, such that 
the noise is energy conserving. 
We derive the 
associated Fokker-Planck 
equation and show its stationary solution is the same as for the
Liouville equation. 
In section \ref{sec: ergodicity} we prove that SDE is ergodic and in section \ref{sec: geometric ergodicity} we demonstrate that it is also geometrically ergodic for smooth, log-convex target distributions.
In section \ref{phi4} we apply MCLMC to study the statistical $\phi^4$ field theory and compare it with Hamiltonian Monte Carlo.

\section{Deterministic dynamics} \label{sec: deterministic}
We will study the ODE
\begin{equation} \label{eq: ODE}
    \dot{\x} = \U \qquad \qquad
    \dot{\U} = P(\U) \boldsymbol{f}(\x),
\end{equation}
where $\x$ is the position of a particle in the configuration space and $\U$ is its velocity.
$P(\U) \equiv I - \U \U^T$ is the projector to the direction perpendicular to the velocity and $\boldsymbol{f}(\x) \equiv - \nabla S(\x) / (d-1)$ is the force\footnote{The force in \cite{MCHMC,ESH} was defined with a factor of $d$ rather than $d-1$, which required weights, proportional to $e^{-S(\x) / d}$, meaning that the sampler without the weights converged to $e^{-S} / e^{-S/d}$. If we replace 
$S \xrightarrow[]{} S d / (d-1)$ this distribution becomes $e^{-S}$ and the weights are not required. This is equivalent to redefining the force as done here. In $d = 1$ this reweighting does not work and the original Hamiltonian formulation should be used.}.
One can arrive at this equation from at least two perspectives: 
(i) particle in external potential, being constrained to unit velocity, also known as the isokinetic ensemble
(ii) energy-conserving dynamics of a particle with non-standard kinetic energy and external potential in natural time parameterization \citep{MCHMC,ESH}, also known as the microcanonical ensemble.

The dynamics preserves the norm of $\U$ if we start with $\U \cdot \U = 1$,
\begin{equation} 
    \frac{d}{dt}(\U \cdot \U) = 2 \U \cdot \dot{\U} = \U \cdot P(\U) \boldsymbol{f}= (1 - \U \cdot \U) (\U \cdot \boldsymbol{f}) = 0,
\end{equation}
so the particle is confined to the $2d -1$ dimensional manifold $\mathcal{M} = \mathbb{R}^d \times S^{d -1}$, i.e. the velocity 
is defined on a sphere of unit radius. We will denote the points on $\mathcal{M}$ by $\Z$. 

In this section we will briefly introduce the differential geometry formalism and show that $\exp\{-S(\x)\}$ is the stationary distribution of Equation \eqref{eq: ODE}.
In language of differential geometry, the dynamics of Equation \eqref{eq: ODE} induces a flow on the manifold, which is a 1-parametrical family of maps from the manifold onto itself $\varphi_t : \mathcal{M} \xrightarrow[]{} \mathcal{M}$, such that $\varphi_t(\Z)$ is the solution of Equation \eqref{eq: ODE} with the initial condition $\Z$. The flow induces the drift vector field $B(\boldsymbol{z})$, which maps scalar observables on the manifold $\mathcal{O}(\Z)$ to their time derivatives under the flow,
\begin{equation} 
    B(\Z)(\mathcal{O}) = \frac{d}{d t} \mathcal{O} \big( \varphi_t (\Z) \big) \vert_{t = 0}.
\end{equation}
We will be interested in the evolution of the probability density distribution of the particle under the flow. In differential geometry, the density is described by a volume form, which is a differential $(2d-1)$-form,
\begin{equation}
    \widehat{\rho}(\Z) = \rho(\Z) \, d z^1 \wedge d z^2 \wedge...\, dz^{2d -1}.
\end{equation}
Volume form $\widehat{\rho}_t$ at time $t$ can be formally translated in time by the push-forward map $\varphi_{s *}$, $\widehat{\rho}_{t+s} = \varphi_{s *} \widehat{\rho}_t$. The infinitesimal form of the above equation gives us the differential equation for the density:
\begin{equation} \label{eq: Liouville}
    \frac{d}{dt} \widehat{\rho}_t = \frac{d}{ds} \big( \varphi_{s *} \widehat{\rho}_t \big) \vert_{s = 0} = \frac{d}{ds} \big( \varphi_{-s}^* \widehat{\rho}_t \big) \vert_{s = 0} \equiv - \mathcal{L}_{B} \widehat{\rho}_t  = - \big( \mathrm{div}_{\widehat{\rho}_t} B \big) \widehat{\rho}_t,
\end{equation}
which is also known as the Liouville equation. Here, $\varphi^*_{-s} = \varphi_{s *}$ is the pull-back map, $\mathcal{L}_{B}$ is the Lie derivative along the drift vector field $B$ and $\mathrm{div}$ is the divergence. This is the continuity equation for the probability in the language of differential geometry. 
The Liouville equation in coordinates is
\begin{equation} \label{eq: Liouville coordinates}
    \dot{\rho}(\Z) = -\nabla \cdot \big(\rho B \big) \equiv - \sum_{i = 1}^{2d - 1} \frac{\partial}{\partial z^{i}} \big( \rho(\Z) B^{i}(\Z) \big)
\end{equation}
We will work in the Euclidean coordinates $\{ x^i\}_{i = 1}^{d}$ on the configuration space and spherical coordinates $\{ \vartheta^{\mu}\}_{\mu = 1}^{d-1}$  for the velocities on the sphere, such that the manifold is parametrized by $\Z = (\x, \boldsymbol{\vartheta})$. We will adopt the Einstein summation convention and use the Latin letters ($i$, $j$, ...) to indicate the sum over the Euclidean coordinates and the Greek letters ($\mu$, $\nu$, ...) for the sum over the spherical coordinates. The spherical coordinates are defined by the inverse transformation,
\begin{equation} 
(u_1, u_2, \ldots u_{d}) =  (\cos \vartheta^1,\,
\sin \vartheta^1 \cos \vartheta^2,\,
\ldots ,\,
\sin \vartheta^1 \cdots \sin \vartheta^{d-2} \cos \vartheta^{d-1},\,
\sin \vartheta^1 \cdots \sin \vartheta^{d-2} \sin \vartheta^{d-1})
\end{equation}
which automatically ensures $\U \cdot \U = 1$. The metric on the sphere in the spherical coordinates is
\begin{equation} 
    g_{\mu \nu} = \frac{\partial u_k}{\partial \vartheta^{\mu}} \frac{\partial u_k}{\partial \vartheta^{{\nu}}} = \mathrm{Diag}[1,\, \sin^2(\vartheta^1),\, \sin^2(\vartheta^1) \sin^2(\vartheta^2),\, ...]_{i j},
\end{equation}
and the volume element is $\sqrt{g} = \mathrm{det} g_{\mu \nu}^{1/2} = \Pi_{k=1}^{d-2}(\sin \vartheta^k)^{d-1-k}$.
The drift vector field is
\begin{equation} \label{eq: drift}
    B = u_i(\Th) \frac{\partial}{\partial x^i} + \partial^{\mu}(\U \cdot 
    \boldsymbol{f}(\x)) \frac{\partial}{\partial \vartheta^{\mu}},
\end{equation}
where the second term results from
$B_{\nu} = \frac{\partial u_i}{\partial \vartheta^{\nu}} P_{ij} f_j = \frac{\partial u_i}{\partial \vartheta^{\nu}} f_i$.

\begin{theorem}
The stationary solution of Liouville equation \eqref{eq: Liouville coordinates} is 
\begin{equation} \label{eq: stationary distribution}
    \rho_{\infty} \propto e^{-S(\x)} \sqrt{g(\Th)} . 
\end{equation}
\end{theorem}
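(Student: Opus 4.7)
The plan is to substitute $\rho_\infty = e^{-S(\x)}\sqrt{g(\Th)}$ into the stationarity condition $\nabla\cdot(\rho B)=0$ coming from Equation~\eqref{eq: Liouville coordinates} and show that the configuration-space and angular contributions cancel exactly. Using the drift from Equation~\eqref{eq: drift} and the identity $B^\mu = g^{\mu\nu}\partial_\nu(\U\cdot\boldsymbol{f})$ that follows from the remark after it, I would split
\begin{equation*}
\nabla\cdot(\rho_\infty B) = \partial_{x^i}(\rho_\infty\, u_i) + \partial_{\vartheta^\mu}\bigl(\rho_\infty\, g^{\mu\nu}\,\partial_\nu(\U\cdot\boldsymbol{f})\bigr),
\end{equation*}
and treat the two pieces separately.

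For the configuration-space term, $\sqrt{g}$ and $u_i$ are independent of $\x$, so the derivative only acts on $e^{-S}$ and yields $-\sqrt{g}\,e^{-S}\,\U\cdot\nabla S = (d-1)\,\sqrt{g}\,e^{-S}\,\U\cdot\boldsymbol{f}$ after invoking $\boldsymbol{f}=-\nabla S/(d-1)$. For the angular term, $e^{-S}$ pulls out of the $\vartheta$-derivative and I recognize the standard Riemannian divergence formula
\begin{equation*}
\partial_{\vartheta^\mu}\bigl(\sqrt{g}\,g^{\mu\nu}\,\partial_\nu h\bigr) = \sqrt{g}\,\Delta_{S^{d-1}} h,
\end{equation*}
where $\Delta_{S^{d-1}}$ is the Laplace--Beltrami operator on the unit sphere and $h(\Th) = \U(\Th)\cdot\boldsymbol{f}(\x)$ is viewed as a function of $\Th$ with $\x$ held fixed.

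The crux is that $h$ is the restriction to $S^{d-1}$ of the linear functional $\U\mapsto\boldsymbol{f}\cdot\U$ on $\mathbb{R}^d$, i.e.\ a spherical harmonic of degree $\ell=1$, whose Laplace--Beltrami eigenvalue is $-\ell(\ell+d-2)=-(d-1)$. Hence the angular piece evaluates to $-(d-1)\,\sqrt{g}\,e^{-S}\,\U\cdot\boldsymbol{f}$ and cancels the configuration-space piece exactly, so $\nabla\cdot(\rho_\infty B)=0$. Combined with the obvious time-independence of the ansatz, this gives the theorem.

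The only step that is not pure bookkeeping is this spherical-harmonic cancellation: the prefactor $1/(d-1)$ in the definition of $\boldsymbol{f}$ is tuned precisely to the degree-one eigenvalue of $\Delta_{S^{d-1}}$, which is exactly why the reformulation with $d-1$ removes the weights required by the earlier $d$-factor convention mentioned after Equation~\eqref{eq: ODE}.
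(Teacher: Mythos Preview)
Your proof is correct and follows the same overall decomposition as the paper: split $\nabla\cdot(\rho_\infty B)$ into the $\x$-piece, which yields $(d-1)\sqrt{g}\,e^{-S}\,\U\cdot\boldsymbol f$, and the $\Th$-piece, which must cancel it. The difference lies in how the angular piece is evaluated. The paper does not invoke the Laplace--Beltrami eigenvalue; instead it rotates the spherical coordinates at each fixed $\x$ so that $\vartheta^1=0$ points along $\boldsymbol f(\x)$, computes directly that $\frac{1}{\sqrt g}\partial_\mu(\sqrt g\,B^\mu)=-(d-1)\,\U\cdot\boldsymbol f$ from the explicit form of $\sqrt g$ and $B_\mu=-\sin\vartheta^1\,\delta_{1\mu}\,|\boldsymbol f|$, and then appeals to scalar invariance to export the result to arbitrary coordinates.

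Your route is more conceptual: recognizing $B^\mu=\partial^\mu(\U\cdot\boldsymbol f)$ as a gradient turns the angular divergence into $\Delta_{S^{d-1}}(\U\cdot\boldsymbol f)$, and quoting the degree-one eigenvalue $-(d-1)$ finishes it in one line while also explaining \emph{why} the $1/(d-1)$ normalization in $\boldsymbol f$ is the right one. The paper's route is more self-contained, requiring no external facts about spherical harmonics, at the cost of a slightly longer explicit computation. Both are complete.
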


\begin{proof}
Inserting $\rho_{\infty}$ in the Liouville equation \eqref{eq: Liouville coordinates} gives
\begin{equation*}
    \dot{\rho}_{\infty} = - \frac{\partial}{\partial x_i} (\rho_{\infty} B^i) - \frac{\partial}{\partial \vartheta^{\mu}} ( \rho_{\infty} B^{\mu}) = - \U \cdot \partial_{\x} \rho_{\infty} - \frac{1}{\sqrt{g}} \partial_{\mu} \big( \sqrt{g} B^{\mu} \big) \rho_{\infty} .
\end{equation*}
The first term is $\U \cdot \partial_{\x} \rho_{\infty} = (d -1) \U \cdot \boldsymbol{f} \rho_{\infty}$. 
In the second term we recognize Laplacian of $\U \cdot \boldsymbol{f}$, so it transforms as a scalar under the transformations of the spherical coordinates. We can use this to simplify the calculation: at each fixed $\x$ we will pick differently oriented spherical coordinates, such that $\vartheta^1 = 0$ always corresponds to the direction of $\boldsymbol{f}(\x)$ and $f_i = \delta_{1 i} \vert \boldsymbol{f} \vert$. We then compute
$B_{\mu} = (\partial_{\mu} u_1) \vert \boldsymbol{f} \vert = - \sin \vartheta^1 \delta_{1 \mu} \vert \boldsymbol{f} \vert$,
so
\begin{equation*}
    \frac{1}{\sqrt{g}} \partial_{\mu} \big( \sqrt{g} B^{\mu} \big) = \frac{- \vert \boldsymbol{f} \vert}{\sin^{d-2} \vartheta^1} \frac{\partial}{\partial \vartheta^1 }\sin^{d-1} \vartheta^1
    = - (d - 1) \vert \boldsymbol{f} \vert \cos \vartheta^1 = - (d-1) \U \cdot \boldsymbol{f}.
\end{equation*}
The last expression transforms as a scalar with respect to transformations on the sphere and is therefore valid in all coordinate systems, in particular, in the original one. 
Combining the two terms gives $\dot{\rho}_{\infty} = 0$, completing the proof.
\end{proof}

\section{Stochastic dynamics} \label{sec: stochastic}
In \cite{MCHMC} it was proposed that adding a random perturbation to the momentum direction after each step of the discretized deterministic dynamics boosts ergodicity, but the continuous-time version was not explored. Here, we consider a continuous-time analog and show this leads 
to Microcanonical Langevin SDE for 
the particle evolution and to Fokker-Planck equation for the probability density evolution. We promote the deterministic ODE of Equation \eqref{eq: ODE} to the following Microcanonical 
Langevin SDE:
\begin{align} \label{eq:SDE}
    d \x &= \U dt \\ \nonumber
    d \U &= P(\U) \boldsymbol{f}(\x) dt + \eta P(\U) d \boldsymbol{W}. 
\end{align}
Here, $\boldsymbol{W}$ is the Wiener process, i.e. a vector of random noise variables drawn from a Gaussian distribution with 
zero mean and unit variance, and $\eta$ is a free parameter. The last term is the standard Brownian motion increment on the sphere, constructed by an orthogonal projection from the Euclidean $\mathbb{R}^d$, as in \cite{elworthyShort}.

More formally, we may write Equation \eqref{eq:SDE} as a Stratonovich degenerate diffusion on the manifold \citep{elworthy, statequilibrium, DegenerateDiffusion},
\begin{equation} \label{eq: stratonovich}
    d \Z = B(\Z) dt + \sum_{i = 1}^d \sigma_i(\Z) \circ d W_i,
\end{equation}
where $W_i$ are independent $\mathbb{R}$-valued Wiener processes and $\sigma_i$ are vector fields, in coordinates expressed as
\begin{equation}
    \sigma_i(\Th) = g^{\mu \nu}(\Th) \frac{\partial u_i}{\partial \vartheta^{\nu}}(\Th) \frac{\partial}{\partial \vartheta^{\mu}}(\Th).
\end{equation}

With the addition of the diffusion term, the Liouville equation \eqref{eq: Liouville coordinates} is now promoted to the Fokker-Planck equation \citep{elworthyShort},
\begin{equation} \label{eq: fokker-planck}
    \dot{\rho} = - \nabla \cdot (\rho B) + \frac{\eta^2}{2} \widehat{\nabla}^{2}\rho ,
\end{equation}
where $\widehat{\nabla}^{2} = \nabla^{\mu} \nabla_{\mu}$ is the Laplace-Beltrami operator on the sphere and $\nabla_{\mu}$ is the covariant derivative on the sphere. In coordinates, the Laplacian can be computed as $\frac{1}{\sqrt{g}} \partial_{\mu} \big( \sqrt{g} \partial^{\mu} \rho \big)$.

\begin{theorem}
The distribution $\rho_{\infty}$ of Equation \eqref{eq: stationary distribution} is a stationary solution of the Fokker-Planck equation \eqref{eq: fokker-planck}
for any value of $\eta$.
\end{theorem}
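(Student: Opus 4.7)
The plan is to exploit linearity and verify that the drift and diffusion contributions to $\dot\rho_\infty$ vanish separately. The drift piece $-\nabla\cdot(\rho_\infty B)$ is precisely the quantity shown to be zero in the proof of Theorem~1, so the only new work is to check that $\widehat{\nabla}^2 \rho_\infty = 0$; the $\eta$-independence asserted in the theorem is then automatic, since $\eta$ appears only as a prefactor of the diffusion term.

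The structural observation I would exploit is the factorization $\rho_\infty(\x,\Th) = e^{-S(\x)} \sqrt{g(\Th)}$, in which the first factor is the desired target density and the second is the coordinate representation of the uniform (Riemannian) measure on $S^{d-1}$. Because the SDE noise $\eta P(\U) d\boldsymbol{W}$ is constrained to the tangent space of the velocity sphere, the operator $\widehat{\nabla}^2$ in \eqref{eq: fokker-planck} differentiates only the spherical coordinates $\Th$, and $e^{-S(\x)}$ passes through as a multiplicative constant. The required identity therefore reduces to the statement that the sphere diffusion annihilates $\sqrt{g(\Th)}$, which is the textbook fact that the uniform distribution is invariant under the intrinsic Brownian motion on $S^{d-1}$. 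Quantitatively, I would rewrite the Fokker--Planck diffusion operator acting on a coordinate density $\rho$ in the form $\sqrt{g}\,\Delta_{\mathrm{LB}}(\rho/\sqrt{g})$, i.e.\ as the formal $L^2(d^{d-1}\vartheta)$ adjoint of the Laplace--Beltrami generator acting on scalar observables (as in \cite{elworthyShort}). Evaluated on $\rho_\infty$ the quotient $\rho_\infty/\sqrt{g} = e^{-S(\x)}$ does not depend on $\Th$, so $\Delta_{\mathrm{LB}}$ kills it.

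The main obstacle I anticipate is not the algebra but the notational bookkeeping: the Laplace--Beltrami of $\sqrt{g}$ read naively as a scalar on the sphere is nonzero, so one must carefully distinguish the generator of the diffusion (acting on observables) from its formal adjoint (acting on coordinate densities) and identify the operator appearing in \eqref{eq: fokker-planck} with the latter. Once this identification is in hand, the factorization argument above reduces the proof to the immediate observation that the Laplace--Beltrami on $S^{d-1}$ annihilates functions constant in $\Th$.
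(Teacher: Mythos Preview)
Your proposal is correct and matches the paper's proof in structure: both split the right-hand side of \eqref{eq: fokker-planck} into drift and diffusion pieces, dispose of the drift by citing Theorem~1, and then argue that the diffusion term annihilates $e^{-S(\x)}\sqrt{g(\Th)}$ because the sphere Laplacian leaves the uniform measure invariant. The paper phrases this last step as ``$\nabla^{\mu}\nabla_{\mu}\sqrt{g}=0$ since the covariant derivative of the metric determinant is zero,'' whereas you phrase it as ``$\widehat{\nabla}^{2}$ on coordinate densities is $\sqrt{g}\,\Delta_{\mathrm{LB}}(\cdot/\sqrt{g})$, and $\rho_\infty/\sqrt{g}$ is constant in $\Th$''; these are the same fact in tensor-density versus adjoint language, and your version has the virtue of making explicit why the coordinate expression $\frac{1}{\sqrt{g}}\partial_\mu(\sqrt{g}\,\partial^\mu\cdot)$ given after \eqref{eq: fokker-planck} should not be applied naively to $\sqrt{g}$ as if it were a scalar.
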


\begin{proof}
Upon inserting $\rho_{\infty}$ into the right-hand-side of the Fokker-Planck equation, the first term vanishes by Theorem 1. The second term also vanishes,
\begin{equation*}
    \widehat{\nabla}^{2} \rho_{\infty} \propto \nabla^{\mu} \nabla_{\mu} \sqrt{g} e^{- S(\x)} = e^{- S(\x)} \nabla^{\mu} \nabla_{\mu} \sqrt{g} = 0,
\end{equation*}
since the covariant derivative of the metric determinant is zero. 
\end{proof}

\section{Discretization}
Consider for a moment Equation \eqref{eq:SDE} with only the diffusion term on the right-hand-side. This SDE describes the Brownian motion on the sphere and the identity flow on the $\x$-space \citep{elworthyShort}. Realizations from the Brownian motion on the sphere can be generated exactly \citep{BrownianSphere}. Let us denote by $\psi^{\eta}_s$ the corresponding density flow map, such that $\psi^{\eta}_s[\rho_t] = \rho_{t+s}$.  
The flow of the full SDE \eqref{eq:SDE} can then be approximated at discrete times $\{ n \epsilon\}_{n = 0}^{\infty}$ by the Euler-Maruyama scheme \citep{SDEbook}:
\begin{equation} \label{eq: euler}
    \rho_{(n+1) \epsilon} = \psi^{\eta}_{\epsilon}[\varphi_{\epsilon *} \,\rho_{n \epsilon}].
\end{equation}
For a generic SDE, this approximation leads to bias in the stationary distribution. This is however not the case in MCLMC:

\begin{theorem}
The distribution $\rho_{\infty}$ of Equation \eqref{eq: stationary distribution} is preserved by the Euler-Maruyama scheme \eqref{eq: euler} for any value of $\eta$.
\end{theorem}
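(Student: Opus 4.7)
The plan is to exploit the composition structure of the Euler--Maruyama update: since $\rho_{(n+1)\epsilon} = \psi^{\eta}_{\epsilon}[\varphi_{\epsilon *} \rho_{n\epsilon}]$ is the successive application of the deterministic push-forward $\varphi_{\epsilon *}$ and the pure-diffusion flow $\psi^{\eta}_{\epsilon}$, it suffices to show that each of the two operators separately fixes $\rho_{\infty}$. The composition then trivially fixes $\rho_{\infty}$ as well, and (as the abstract advertises) no interaction term between drift and diffusion can introduce bias.

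For the deterministic half, I would invoke Theorem 1 directly. Because $\rho_{\infty}$ is a stationary solution of the Liouville equation, we have $\mathcal{L}_{B} \rho_{\infty} = 0$, which is equivalent to $\varphi_{s*} \rho_{\infty} = \rho_{\infty}$ for every $s \in \mathbb{R}$; in particular $\varphi_{\epsilon *}\rho_{\infty} = \rho_{\infty}$. So after the drift step the density is unchanged.

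For the stochastic half, the operator $\psi^{\eta}_{\epsilon}$ is the time-$\epsilon$ flow of the heat equation on the sphere at each fixed $\x$, namely $\dot\rho = \tfrac{\eta^2}{2} \widehat\nabla^{2}\rho$. Applying this to $\rho_{\infty}(\x,\Th) = e^{-S(\x)} \sqrt{g(\Th)}$, the factor $e^{-S(\x)}$ passes through the Laplace--Beltrami operator (which acts only on $\Th$), and the calculation already carried out in the proof of Theorem 2 gives $\widehat\nabla^{2}\sqrt{g} = 0$. Hence $\widehat\nabla^{2}\rho_{\infty} = 0$ and $\psi^{\eta}_{\epsilon}[\rho_{\infty}] = \rho_{\infty}$.

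Combining the two observations, $\psi^{\eta}_{\epsilon}[\varphi_{\epsilon *}\rho_{\infty}] = \psi^{\eta}_{\epsilon}[\rho_{\infty}] = \rho_{\infty}$, which is exactly the claim. There is no real obstacle here; the only subtlety worth flagging is that the proof relies on the product form of $\rho_{\infty}$, so that the uniform spherical factor $\sqrt{g}$ is independent of $\x$ and is therefore annihilated by the sphere Laplacian regardless of the potential $S$. This decoupling of the drift invariance (from Theorem 1) and the diffusion invariance (from harmonicity of $\sqrt{g}$) is precisely what makes the splitting scheme exact in distribution.
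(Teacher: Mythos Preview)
Your proposal is correct and follows exactly the paper's approach: invoke Theorem~1 to conclude that the deterministic push-forward $\varphi_{\epsilon *}$ preserves $\rho_{\infty}$, then invoke the computation from Theorem~2 (that $\widehat{\nabla}^{2}\rho_{\infty}=0$, so the pure-diffusion flow $\psi^{\eta}_{\epsilon}$ also preserves $\rho_{\infty}$), and compose. The paper's proof is simply the terse two-sentence version of what you wrote.
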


\begin{proof}
The deterministic push forward map preserves $\rho_{\infty}$ by the Theorem 1. The Fokker-Planck equation for the stochastic-only term is $\dot{\rho} = \frac{\eta^2}{2} \widehat{\nabla} \rho$ which preserves $\rho_{\infty}$ by the Theorem 2.
\end{proof}

In the standard Langevin equation, 
the fluctuation term is accompanied by a dissipation term, and the strength of both is controlled by damping coefficient. The deterministic and stochastic parts do not preserve the stationary distribution separately. In contrast, for MCLMC an 
exact deterministic ODE integrator would remain exact with SDE, so the integration scheme for the deterministic dynamics is the only bias source.

Furthermore, in the discrete scheme \eqref{eq: euler} it is not necessary to have the Brownian motion on the sphere as a stochastic update in order to have $\rho_{\infty}$ as a stationary distribution. In fact, any discrete stochastic process on the sphere, which has the uniform distribution as the stationary distribution will do, for example the one used in \cite{MCHMC}.

\section{Ergodicity} \label{sec: ergodicity}

We have established that $\rho_{\infty}$ is the stationary distribution of the MCLMC SDE. Here we demonstrate the uniqueness of the stationary distribution. 

Let's define $\mathcal{S}(\Z) = \{B(\Z), \sigma_{1}(\Z), \sigma_{2}(\Z), \ldots \sigma_{d}(\Z) \}$.
The H\"ormander's condition \citep{MolecularDynamics} is satisfied at $\Z \in \mathcal{M}$ if the smallest Lie algebra containing $\mathcal{S}(\Z)$ and closed under
$v \mapsto [B, v]$ is the entire tangent space $T_{\Z}(\mathcal{M})$.
Here $[\cdot, \cdot]$ is the Lie bracket, in coordinates $[X, Y] = X^i \partial_{i} Y^k \partial_k - Y^i \partial_{i} X^k \partial_k$.

\begin{lemma} \label{hormander} (H\"ormander's property)
MCLMC SDE satisfies the H\"ormander's condition for all $\Z \in \mathcal{M}$.
\end{lemma}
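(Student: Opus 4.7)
The plan is to verify H\"ormander's condition by exploiting the product structure $T_{\Z} \mathcal{M} = T_{\x} \mathbb{R}^d \oplus T_{\U} S^{d-1}$, of total dimension $2d-1$. I would show (i) that $\{\sigma_1, \ldots, \sigma_d\}$ already span the spherical factor at every point, and (ii) that adjoining the first-order brackets $\{[B, \sigma_i]\}_{i=1}^d$, together with $B$ itself, is enough to fill in the spatial factor. Because the force $\boldsymbol{f}(\x)$ will drop out of the calculation (it sits in the $\partial_{\vartheta^{\mu}}$ component of $B$, not in the $\partial_{x^j}$ component), the argument will hold uniformly in $\Z$ and no higher-order brackets or iterated commutators are needed.

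For the spherical factor, writing $M_{k\mu} = \partial u_k/\partial \vartheta^{\mu}$, the induced metric is $g_{\mu\nu} = M_{k\mu} M_{k\nu}$. The ambient $\mathbb{R}^d$ image of $\sigma_i = g^{\mu\nu}(\partial_{\nu} u_i)\,\partial_{\vartheta^{\mu}}$ is then $(M g^{-1} M^T)_{ki}\, e_k = P(\U)\, e_i$, where the matrix identity $M g^{-1} M^T = I - \U \U^T$ is just the statement that $M g^{-1} M^T$ is the Euclidean orthogonal projector onto the column span of $M$, i.e.\ onto $T_{\U} S^{d-1} \subset \mathbb{R}^d$. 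Since $P(\U)$ has rank $d-1$, the $d$ vectors $\sigma_i$ span $T_{\U} S^{d-1}$ at every $\U$.

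The main computation is the bracket with $B$. Since $\sigma_i$ has no $\partial_{x^j}$ component and no $\x$-dependence, while $B^{x^j} = u_j(\Th)$ depends only on $\Th$, the only surviving spatial contribution is
\begin{equation*}
[B, \sigma_i]^{x^j} = -\sigma_i^{\vartheta^{\mu}}\, \partial_{\vartheta^{\mu}} u_j = -g^{\mu\nu}(\partial_{\nu} u_i)(\partial_{\mu} u_j) = -(\delta_{ij} - u_i u_j),
\end{equation*}
using the same Gram-matrix identity as before. The $\partial_{\vartheta^{\mu}}$ components of $B$ and of $[B, \sigma_i]$ are immaterial: they lie in $\mathrm{span}\{\sigma_k\}$ and can be cancelled by subtracting suitable linear combinations of the $\sigma_k$, which are already in the generating set. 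What is left is the vector $\U \in T_{\x}\mathbb{R}^d$ coming from $B$, together with the $d$ vectors $-P(\U) e_i$ coming from the brackets, which span $\U^{\perp} \subset T_{\x}\mathbb{R}^d$ with rank $d-1$. These together form a basis of $T_{\x}\mathbb{R}^d$, and combined with the $(d-1)$-dimensional span of $\{\sigma_i\}$ in $T_{\U} S^{d-1}$ we exhaust all $2d-1$ directions, so the Lie algebra generated by $\mathcal{S}(\Z)$ and closed under $v \mapsto [B, v]$ is all of $T_{\Z}\mathcal{M}$. The only non-mechanical step is the identification $g^{\mu\nu}(\partial_{\nu} u_i)(\partial_{\mu} u_j) = \delta_{ij} - u_i u_j$; once that is in hand the rest is dimension counting.
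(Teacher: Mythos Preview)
Your proof is correct and follows essentially the same route as the paper: decompose $T_{\Z}\mathcal{M} = T_{\x}\mathbb{R}^d \oplus T_{\U}S^{d-1}$, use the $\sigma_i$ to span the sphere factor, use $B$ (equivalently its $\x$-part $u_i\partial_{x^i}$) to get the $\U$-direction in $T_{\x}\mathbb{R}^d$, and use $[B,\sigma_i]$ to fill in $\U^{\perp}$. The paper is slightly terser---it writes $[B_x,\sigma_i] = g^{\mu\nu}(\partial_\mu u_i)(\partial_\nu u_j)\partial_{x^j}$ and asserts this spans $U^{\perp}$---whereas you make the key identity $g^{\mu\nu}(\partial_\nu u_i)(\partial_\mu u_j)=\delta_{ij}-u_i u_j$ explicit, but the argument is the same.
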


\begin{proof}
    Fix some $\Z = (\x, \U)$. The tangent space at $\Z$ is a direct sum $T_{\Z}(\mathcal{M}) = T_{\x}(\mathbb{R}^d) \oplus T_{\U}(S^{d-1})$.
    $\sigma_i$ span $T_{\U}(S^{d-1})$ by construction: they were obtained by passing the basis of $\mathbb{R}^{d}$ through the orthogonal projection, $P(\U)$, which has rank $d-1$.
    For convenience we may further decompose $T_{\x}(\mathbb{R}^d) = U \oplus U^{\perp}$, where $U = \{ \lambda \U \vert \lambda \in \mathbb{R}\} $ is the space spanned by $\U$ and $U^{\perp}$ is its orthogonal complement. We can write $B = B_{x} + B_{u}$, such that $B_x \in T_{\x}(\mathbb{R}^d)$ and $B_u \in T_{\U}(S^{d-1})$, see also Equation \eqref{eq: drift}.
    $B_x = u_i \partial_{x^i}$ spans $U$, so we are left with covering $U^{\perp}$.
    $[B_{u}, \sigma_{i}] \in T_{\U}(S^{d-1})$, so it does not interest us anymore. However,
    \begin{equation*}
        [B_x,\, \sigma_i] =  - \sigma_i(B_x) = g^{\mu \nu} \frac{\partial u_i}{\partial \vartheta^{\mu}} \frac{\partial u_j}{\partial \vartheta^{\nu}} \partial_j,
    \end{equation*}
    so $[B_x,\, \sigma_i]$ span $U^{\perp}$, completing the proof.
\end{proof}

\begin{lemma} \label{path accessibility} (Path accessibility of points):
For every two points $\Z_i, \Z_f \in \mathcal{M}$ there exists a continuous path 
$\gamma: [0, T] \rightarrow \mathcal{M}$,
$\gamma(0) = \Z_i$, 
$\gamma(T) = \Z_f$ 
and values $0 = t_0 < t_1 < \cdots < t_N = T$ with a corresponding sequence of vectors $v_{n} \in \mathcal{S}$, such that for each $0\leq n <N$,
$\dot{\gamma}(t) = v_n(\gamma(t))$ for $t_n < t < t_{n+1}$.
\end{lemma}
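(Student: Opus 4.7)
The plan is to build $\gamma$ explicitly as a concatenation of two kinds of segments: $\sigma_i$-flows, which keep $\x$ fixed while reorienting $\U$ along great-circle arcs, and $B$-flows, which translate $\x$ while slightly curving $\U$. The first step is to show that forward $\sigma_i$-flows alone can carry $\U$ between any two unit vectors at fixed $\x$. The $\sigma_i$ span $T_{\U} S^{d-1}$ everywhere (as verified in the proof of Lemma~\ref{hormander}), and their integral curves are great circles, periodic with period $2\pi$; forward motion along an integral curve therefore already covers both directions. Combined with the bracket-generating property on the connected sphere, Chow's theorem applied to $S^{d-1}$ implies that a finite concatenation of forward $\sigma_i$-segments can steer $\U$ from any initial unit vector to any target while $\x$ is held fixed.

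Next I would assemble the full path. Fix a smooth reference curve from $\x_i$ to $\x_f$, for concreteness the straight segment, and partition it into $N$ waypoints $\x^{(k)}$ with spacing $\ell = |\x_f - \x_i|/N$. Starting from $(\x_i, \U_i)$, for each $k = 0, \ldots, N-1$ I would apply a $\sigma_i$-sequence to rotate the current velocity to a trial direction $\U^{(k)} \approx (\x^{(k+1)} - \x^{(k)})/\ell$, and then flow along $B$ for a trial time $\tau^{(k)} \approx \ell$. To leading order this lands at $\x^{(k)} + \tau^{(k)} \U^{(k)}$, with an $O(\ell^2)$ deviation caused by $\boldsymbol{f}$ curving the trajectory. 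A final $\sigma_i$-sequence at $\x_f$ rotates the ending velocity to $\U_f$.

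The main obstacle is to hit each waypoint $\x^{(k+1)}$ exactly rather than only to leading order. I would resolve this using the implicit function theorem on the map $E_k$ sending $(\U_0, \tau)$ to the $\x$-component of the $B$-flow started at $(\x^{(k)}, \U_0)$. At the trial value $(\U^{(k)}, \ell)$ the differential has rank $d$: $\partial_\tau E_k = \U^{(k)} + O(\ell)$ and $\partial_{\U_0} E_k$ acts as $\ell$ times the identity on $T_{\U^{(k)}} S^{d-1}$ up to an $O(\ell^2)$ correction, and these span $\mathbb{R}^d$ for $\ell$ small. Standard ODE regularity in $\boldsymbol{f}$ gives uniform remainder bounds across $k$, so for $N$ large the exact equation $E_k(\U_0, \tau) = \x^{(k+1)}$ is solvable near the trial values. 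Chaining the corrected $B$-segments with the preparatory and terminal $\sigma_i$-sequences yields the desired piecewise path.
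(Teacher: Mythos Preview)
Your three-stage architecture (reorient, translate, reorient) matches the paper's. Stage~II, however, is genuinely different: the paper observes (from \cite{MCHMC}) that the $B$-trajectories are, up to time reparametrisation, geodesics of a conformally flat Riemannian metric on $\mathbb{R}^d$, so a \emph{single} $B$-segment with a suitably chosen initial direction carries $\x_i$ to $\x_f$. Your many-small-steps construction, with the implicit function theorem correcting each short $B$-arc to hit the next waypoint exactly, is more hands-on but avoids appealing to geodesic completeness of that conformal metric; it is a valid and arguably more self-contained alternative.

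There is, however, a genuine error in your Stage~I. The integral curves of $\sigma_i$ are \emph{not} periodic. In the ambient description $\sigma_i(\U)=P(\U)e_i=e_i-u_i\U$, so along the flow $\dot u_i=1-u_i^2$: the orbit is the great-circle arc joining $-e_i$ to $e_i$, but it is traversed monotonically, with $e_i$ attracting and $-e_i$ repelling, each reached only in infinite time. Forward $\sigma_i$-motion therefore does \emph{not} cover both directions. Already on $S^1$ (write $\U=(\cos\phi,\sin\phi)$) one has $\sigma_1=-\sin\phi\,\partial_\phi$ and $\sigma_2=\cos\phi\,\partial_\phi$; starting at $\phi=0$, both fields point into $[0,\pi/2)$ and neither can push $\phi$ past $\pi/2$ or below $0$, so forward-only reachability fails outright. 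Chow's theorem, as you invoke it, only guarantees accessibility by piecewise flows in \emph{either} time direction.

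The repair is to note that for the intended application (the support theorem used in Lemma~\ref{density}) the relevant control system is $\dot\Z=B(\Z)+\sum_i c_i(t)\,\sigma_i(\Z)$ with unconstrained real controls $c_i$, so one is effectively allowed to flow along $\pm\sigma_i$. With that reading the sphere step is immediate (the $\sigma_i$ already span $T_{\U}S^{d-1}$ pointwise, so no brackets are even needed), and the rest of your argument goes through. You should drop the periodicity claim and state explicitly that you are using $\pm\sigma_i$.
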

\begin{proof}
    Starting at $\Z_i$, we will reach $\Z_f$ in three stages. In stage I, we will use $\sigma_i$ to reorient $\U_{i}$ to the desired direction $\bar{\U}$ (as determined by the stage II). In stage II we will then follow $B$ to reach the final destination $\x_f$ in the configuration space. In stage III, we will reorient the velocity from the end of stage II to the desired $\U_f$.

    \underline{Stage I}: First we note that $\sigma_{k} = - \sin \vartheta^k \frac{\partial}{\partial \vartheta^k}$ if $\vartheta^i = \pi / 2$ for all $i < k$. In this case, $\dot{\gamma} = \sigma_k$ has a solution $t = t_0 + \log \frac{\tan \vartheta^l(t)/2}{\tan \vartheta^l(t_0)/2 }$ and keeps $\vartheta^{l}(t) = \vartheta^{l}(t_0)$ for $l \neq k$.
    This means that we can first recursively set $\vartheta^n$ to $\pi/2$ by selecting $t_{n+1} = t_n + \log \frac{\tan \pi / 4}{\tan \vartheta_i^n / 2 } $ and $v_n = \sigma_n$ for $n = 1, \, 2, \, \ldots d-1$. 
    Then we go back and set all $\theta^n$ to their desired final value, by selecting $t_{n+1} = t_n + \frac{\tan \bar{\vartheta}^{2d - 1 - n} / 2}{\tan \pi / 4 }$ for $n = d, \, d+1,\, \ldots 2(d-1)$.

    \underline{Stage II}: As shown in \cite{MCHMC}, up to time rescaling, the trajectories of \eqref{eq: ODE} (flows under $B$) are the trajectories of the Hamiltonain $H = \vert \Pi^2 \vert / m(S(\x))$ and are in turn also the geodesics on a conformally flat manifold \citep{MCHMC}. Any two points can be connected by a geodesic and therefore $B$ connects any two points $\x_i$ and $\x_f$. 

    \underline{Stage III}: use the program from stage I.
    
\end{proof}

\begin{lemma} (Smooth, nonzero density) \label{density}
    The law of $\boldsymbol{z}(t)$ admits a smooth density. For every $\Z_i \in \mathcal{M}$ and every Lebesgue-positive measure Borel set $\mathcal{U}_f \in B(\mathcal{M})$, there exists $T(\Z_i,\mathcal{U}_f) \geq 0$, such that $P(\boldsymbol{z}(T) \in \mathcal{U}_f \vert \Z(0) = \Z_i) > 0$.
\end{lemma}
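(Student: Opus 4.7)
The plan is to combine the two classical pillars of regularity theory for degenerate diffusions: H\"ormander's hypoellipticity theorem for the smoothness claim, and the Stroock--Varadhan support theorem for the positivity claim, using the preparatory results already established in Lemmas \ref{hormander} and \ref{path accessibility}.

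First I would address smoothness. The MCLMC SDE in Stratonovich form \eqref{eq: stratonovich} has smooth coefficients (all vector fields $B, \sigma_1,\ldots,\sigma_d$ are smooth on $\mathcal{M}$, since $S$ is assumed smooth and the projection/metric data on $S^{d-1}$ are smooth). The associated backward Kolmogorov generator is $\mathcal{L} = B + \tfrac{\eta^2}{2}\sum_i \sigma_i^2$, which by Lemma \ref{hormander} satisfies H\"ormander's bracket condition at every point of $\mathcal{M}$. H\"ormander's theorem then states that $\mathcal{L}$ (and its adjoint) are hypoelliptic, so the transition kernel $P_t(\Z_i, d\Z)$ admits a $C^\infty$ density $p_t(\Z_i,\Z)$ with respect to the Riemannian volume on $\mathcal{M}$ for every $t>0$. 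This establishes the smooth-density part.

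For the positivity part I would invoke the Stroock--Varadhan support theorem for Stratonovich SDEs: the topological support of the law of $\Z(t)$ started at $\Z_i$ equals the closure, in the uniform topology on $C([0,t];\mathcal{M})$, of the set of control trajectories obtained by replacing the Brownian drivers $W_i$ by absolutely continuous controls $h_i$ and solving the resulting controlled ODE. Lemma \ref{path accessibility} constructs a piecewise-constant ``bang-bang'' control program of total duration $T = T(\Z_i,\Z_f)$ that steers the system from $\Z_i$ exactly to any prescribed endpoint $\Z_f$ using the vector fields in $\mathcal{S}$; these piecewise-constant programs are realizable as limits of admissible smooth controls, so $\Z_f$ lies in the support of the law of $\Z(T)$. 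Consequently, for any open neighborhood $\mathcal{V}$ of $\Z_f$, $P(\Z(T)\in\mathcal{V}\mid\Z(0)=\Z_i)>0$.

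Finally I would upgrade ``open set'' to ``Lebesgue-positive Borel set,'' which is the actual statement. Given a Borel set $\mathcal{U}_f$ of positive Riemannian volume, pick a Lebesgue point $\Z_f\in\mathcal{U}_f$; by the previous paragraph there exists $T$ and an open neighborhood $\mathcal{V}$ of $\Z_f$ on which the smooth density $p_T(\Z_i,\cdot)$ is strictly positive (smoothness plus positive integral over $\mathcal{V}$ via the support theorem forces the density to be positive on a nonempty open subset, and H\"ormander's smoothness propagates this). Then
\begin{equation*}
P\bigl(\Z(T)\in\mathcal{U}_f \bigm| \Z(0)=\Z_i\bigr) \;=\; \int_{\mathcal{U}_f} p_T(\Z_i,\Z)\, d\mathrm{vol}(\Z) \;>\;0,
\end{equation*}
since $\mathcal{U}_f\cap\mathcal{V}$ has positive volume and the integrand is positive there. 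The main obstacle I expect is the last step: translating the support-theorem conclusion (positivity of mass on open sets) into pointwise positivity of the smooth density at a chosen point, which requires using hypoellipticity to rule out the density vanishing identically on open subsets of $\mathcal{V}$; this is standard but worth spelling out carefully.
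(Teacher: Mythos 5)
Your overall strategy (H\"ormander for smoothness, a control/support argument for positivity) is the same as the paper's, and the smoothness part is fine. The gap is in your final step, and you correctly sensed where it is --- but the fix you sketch does not work. The Stroock--Varadhan support theorem plus Lemma \ref{path accessibility} gives you that every open neighborhood $\mathcal{V}$ of $\Z_f$ satisfies $\int_{\mathcal{V}} p_T(\Z_i,\cdot)\,d\mathrm{vol}>0$, i.e.\ the open set $\{p_T(\Z_i,\cdot)>0\}$ accumulates at $\Z_f$. This does \emph{not} imply $p_T(\Z_i,\Z_f)>0$, nor that the density is positive on a full neighborhood of $\Z_f$, nor even that it is positive almost everywhere near $\Z_f$: hypoellipticity gives smoothness but no unique-continuation or strong-minimum principle that forbids a smooth nonnegative density from vanishing on open regions arbitrarily close to a point of its topological support (for degenerate diffusions with drift this genuinely happens; the time-$T$ reachable set need not contain a neighborhood of every support point, and Ben Arous--L\'eandre type examples show the density can vanish at support points). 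Since $\mathcal{U}_f$ is an arbitrary positive-measure Borel set rather than an open set, it can in principle be interleaved with the positivity set of $p_T$ so that $\mathcal{U}_f\cap\{p_T>0\}$ has measure zero even though both sets accumulate at $\Z_f$; your ``Lebesgue point'' choice does not exclude this.

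What is needed, and what the paper's proof supplies, is a \emph{uniform minorization}: a constant $\tilde m>0$ and a full neighborhood $\tilde U$ of $\Z_f$ with $p_T(\Z_i,\cdot)\ge\tilde m$ on $\tilde U$; then $P(\Z(T)\in\mathcal{U}_f)\ge\tilde m\,\mathrm{Leb}(\mathcal{U}_f\cap\tilde U)>0$ by the choice of $\Z_f$. The paper gets this by splitting $T=t_1+t_2$: H\"ormander's theorem over $[0,t_1]$ yields a smooth density which, integrating to one, is bounded below by some $m>0$ on some nonempty open set $\mathcal{U}_i$; the support theorem (applied to the reverse-time SDE) together with Lemma \ref{path accessibility} shows the random time-$t_2$ flow map carries $\mathcal{U}_i$ over $\Z_f$ with positive probability; and the composition Lemma \ref{Newman lemma} (using a lower bound on the Jacobian of the inverse flow on a positive-measure set of noise realizations) converts these two facts into the uniform lower bound $\tilde m$ on a neighborhood $\tilde U$ of $\Z_f$. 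That composition step is the ingredient missing from your argument; without it, or some equivalent local lower bound on the density near $\Z_f$, the conclusion for general positive-measure Borel sets does not follow. (A secondary caveat: when you invoke the support theorem you should also check that the piecewise flows of Lemma \ref{path accessibility}, which follow a single $\sigma_k$ with the drift switched off, are limits of admissible controlled trajectories in which $B$ is always present; this is handled by taking large controls over short times, but it deserves a sentence.)
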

\begin{proof}
    Fix $\Z_f \in \mathcal{M}$, such that every neighborhood of $\Z_f$ has a positive-measure intersection with $\mathcal{U}_f$. By the H\"ormander's theorem, Lemma \ref{hormander} implies that there exist $t_1 > 0$, $m> 0$ and a non-empty open subset $\mathcal{U}_i$ of a chart on $\mathcal{M}$, such that the law of $\Z(t_1)$ has a Lebesgue density of at least $m$ on $\mathcal{U}_i$. Now let $t_2$ be the time $T$ from Lemma \ref{path accessibility}, when applied to $\Z_f$ and any point in $\mathcal{U}_i$. $T$ from this theorem will be $t_1 + t_2$. 
    For k = 1, 2 let $(I_k, \mathcal{I}_k, \mu_k)$ be the Wiener space defined over $[0,t_k]$ and $\Phi_k$ be the time-$t_k$ mappings of the SDE. Since $\mathcal{U}_i$ is open, applying the "support theorem" (Theorem 3.3.1(b) of \cite{MallivianCalculus}) to the reverse-time SDE gives that $\mu_2(\Z_f \in \Phi_2(\mathcal{U}_i)) > 0$. Hence the Lemma \ref{Newman lemma} gives the desired result.
\end{proof}

\begin{theorem} (ergodicity)
    MCLMC SDE \eqref{eq:SDE} admits a unique stationary distribution.
\end{theorem}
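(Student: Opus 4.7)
The plan is to deduce uniqueness from the classical Doob–Khasminskii theorem, which guarantees that a Markov semigroup on a Polish space admits at most one invariant probability measure provided it is both strong Feller and topologically irreducible. Existence of the invariant measure $\rho_{\infty}$ is already established by Theorem 2, so it suffices to verify these two hypotheses for the transition semigroup $P_t \phi(\Z) = \mathbb{E}[\phi(\Z(t)) \mid \Z(0) = \Z]$ associated with the SDE \eqref{eq:SDE} on $\mathcal{M}$.

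For strong Feller I would use the smooth transition density $p_t(\Z_i, \Z_f)$ supplied by Lemma \ref{density}. The Malliavin-calculus construction of this density, which rests on the H\"ormander condition of Lemma \ref{hormander}, in fact yields smoothness jointly in $(\Z_i, \Z_f)$, and dominated convergence applied to $P_t \phi(\Z_i) = \int_{\mathcal{M}} \phi(\Z_f) \, p_t(\Z_i, \Z_f) \, dV(\Z_f)$ then gives continuity in $\Z_i$. Topological irreducibility — that $P(\Z(t) \in \mathcal{U} \mid \Z(0) = \Z_i) > 0$ for some $t$, every $\Z_i \in \mathcal{M}$ and every non-empty open $\mathcal{U} \subset \mathcal{M}$ — is immediate from Lemma \ref{density}, because every non-empty open subset of $\mathcal{M}$ has positive Lebesgue measure. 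Doob's theorem then rules out two distinct invariant probability measures, since any such pair would have to be simultaneously mutually absolutely continuous (by strong Feller plus irreducibility) and mutually singular (by invariance), which is a contradiction.

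The main anticipated obstacle is the strong-Feller step, because Lemma \ref{density} is phrased as smoothness of the law in the final variable alone, whereas strong Feller requires continuous dependence on the initial point. I would close this gap by appealing to the hypoellipticity of the Kolmogorov operator $\partial_t - L$ on $(0,\infty) \times \mathcal{M}$, which is itself a consequence of Lemma \ref{hormander} and promotes $p_t$ to a jointly smooth kernel, or, alternatively, by invoking standard short-time off-diagonal heat-kernel estimates for hypoelliptic diffusions. Once this upgrade is in hand, the ergodicity conclusion is an immediate application of the Doob–Khasminskii framework.
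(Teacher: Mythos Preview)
Your proposal is correct and takes essentially the same route as the paper: both reduce uniqueness to Lemma~\ref{density}, with the paper citing a black-box result (Theorem~6 of \cite{NoseHooverProof}) where you instead spell out the underlying Doob--Khasminskii strong-Feller-plus-irreducibility argument. Your identification of the gap between smoothness of the law in the terminal variable and the strong Feller property, and the proposed fix via hypoellipticity of $\partial_t - \mathcal{L}$ from Lemma~\ref{hormander}, is correct and is exactly what the cited machinery relies on.
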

\begin{proof}
    This follows immediately from Lemma \ref{density} and Theorem 6 in \cite{NoseHooverProof}.
\end{proof}

\section{Geometric ergodicity} \label{sec: geometric ergodicity}

Geometric ergodicity is a statement that the convergence to the stationary distribution is exponentially fast. 
In this section we will assume that the target $S(\x)$ is $M$-smooth and $m$-convex meaning that $m I < \partial_{ij} S(\x) < M I$.
As in \cite{MolecularDynamics} we will also assume periodic boundary conditions at large $\x$, implying that the gradient is bounded, $\nabla S(\x) < g_{\mathrm{max}}$.

Let $\mathcal{L}$ be the infinitesimal generator of the MCLMC SDE:
\begin{equation}
    \mathcal{L} \phi = B (\phi) + \frac{\eta^2}{2} \widehat{\nabla}^2 \phi .
\end{equation}

\begin{lemma} (Lyapunov function) \label{Lyapunov}
$\phi(\x, \U) = \U \cdot \nabla S(\x) + g_{\mathrm{max}}$ is a Lyapunov function:
    \begin{itemize}
        \item $\phi(\Z) > 0$ 
        \item $\phi(\Z) \xrightarrow[]{} \infty$ as $\vert \Z \vert \xrightarrow[]{} \infty$.
        \item $\mathcal{L} \phi < - a \phi + b$ for some $a$, $b > 0$.
    \end{itemize}   
\end{lemma}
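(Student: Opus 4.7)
The plan is to verify the three Lyapunov conditions in turn, with all of the substantive work lying in the drift inequality~(iii). Conditions (i) and (ii) are essentially bookkeeping: Cauchy--Schwarz together with the gradient bound gives $\U \cdot \nabla S(\x) \ge -g_{\mathrm{max}}$, so $\phi \ge 0$ (strict positivity can be recovered by enlarging the additive constant infinitesimally), and the periodic boundary conditions compactify the configuration space to a torus, so $\mathcal{M} \simeq T^{d} \times S^{d-1}$ is itself compact and the growth-at-infinity condition is vacuous.

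For (iii), I would split $\mathcal{L}\phi = B(\phi) + \tfrac{\eta^{2}}{2}\widehat{\nabla}^{2}\phi$ and compute each term. The chain rule along Equation~\eqref{eq: ODE}, using $\boldsymbol{f} = -\nabla S/(d-1)$, gives
\begin{equation*}
B(\phi) \;=\; \U^{T}(\nabla^{2} S)\U + \nabla S \cdot P(\U)\boldsymbol{f}(\x) \;=\; \U^{T}(\nabla^{2} S)\U \;-\; \frac{|\nabla S|^{2} - (\U\cdot\nabla S)^{2}}{d-1}.
\end{equation*}
The first term is bounded above by $M$ via $M$-smoothness, and the second is nonpositive by Cauchy--Schwarz and can simply be dropped. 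For the diffusion term the key observation is that, for each fixed $\x$, the map $\U \mapsto \U\cdot\nabla S(\x)$ is the restriction to the unit sphere of a linear function on $\mathbb{R}^{d}$: it is a first-degree spherical harmonic, hence an eigenfunction of the sphere Laplacian with eigenvalue $-(d-1)$. Consequently $\widehat{\nabla}^{2}\phi = -(d-1)\bigl(\phi - g_{\mathrm{max}}\bigr)$, and combining the pieces yields
\begin{equation*}
\mathcal{L}\phi \;\le\; M \;-\; \frac{\eta^{2}(d-1)}{2}\bigl(\phi - g_{\mathrm{max}}\bigr) \;=\; -a\,\phi + b,
\end{equation*}
with $a = \eta^{2}(d-1)/2$ and $b = M + a\,g_{\mathrm{max}}$, which is precisely the desired drift bound.

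The one delicate step is the spherical Laplacian identity for the linear-in-$\U$ part of $\phi$. I would verify it either by a short calculation in the spherical coordinates $\Th$ already introduced, or, more conceptually, by decomposing the Euclidean Laplacian of $\boldsymbol{y} \mapsto \boldsymbol{y}\cdot\nabla S(\x)$ on $\mathbb{R}^{d}$ into its radial and angular parts and restricting to $|\boldsymbol{y}|=1$: the function is harmonic on $\mathbb{R}^{d}$, and its radial derivative contributes the factor $-(d-1)$. Once this identity is in hand the remainder of~(iii) is algebra. It is worth noting that $m$-convexity is not used in this argument---only $M$-smoothness and the gradient bound enter---so the convexity hypothesis will presumably be invoked only at the subsequent stage that converts this Lyapunov bound into an explicit exponential convergence rate.
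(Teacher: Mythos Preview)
Your argument follows the paper's almost step for step: Cauchy--Schwarz with the strict gradient bound for (i), compactness of $\mathcal{M}$ for (ii), and for (iii) the same three-way split into the Hessian term $\U^{T}(\nabla^{2}S)\U \le M$, the nonpositive velocity-drift contribution, and the spherical Laplacian of the linear-in-$\U$ part. The one substantive difference is the eigenvalue: the paper's direct coordinate computation carries a slip in the power of $\sin\vartheta^{1}$ (it uses $d/2-1$ where the volume element actually gives $d-2$) and arrives at $\widehat{\nabla}^{2}\U = -\tfrac{d}{2}\U$, hence $a = \eta^{2}d/4$; your spherical-harmonic argument gives the correct value $-(d-1)$ and $a = \eta^{2}(d-1)/2$, though either constant suffices for the Lyapunov inequality.
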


\begin{proof}
    The first property follows by 
    \begin{equation*}
        \U \cdot \nabla S > - \vert \U \vert \vert \nabla S \vert > -g_{\mathrm{max}}.
    \end{equation*}
    The second property is trivially satisfied because the phase space is bounded. 
    
    For the third property, let's compute terms one by one:
    \begin{itemize}
        \item The $\x$-part of the drift gives $\U \cdot \partial_{\x} \phi = u_i u_j \partial_{i j} S(\x) < M \vert \U \vert^2 = M$.
            
        \item The $\U$-part of the drift gives
            $\partial^{\mu}( \U \cdot \boldsymbol{f}(\x) ) \partial_{\mu}( \U \cdot \nabla S(\x))  =  - \frac{1}{d-1}  \vert \partial_{\nu} (\U \cdot \nabla S) \vert^2_g \leq 0$
            where $\vert v \vert_g^2 = g^{\mu \nu} v_{\mu} v_{\nu}$ is the metric induced norm.

        \item The Laplacian of $u_1 = \cos \vartheta$ is
            \[ 
            \widehat{\nabla}^2 u_{1} = \frac{1}{\sqrt{g}} \partial_{\mu} (\sqrt{g} \partial^{\mu} u_1)  = \frac{1}{(\sin \theta)^{\frac{d}{2} - 1}} \big(  (\sin \theta)^{\frac{d}{2} - 1} (\cos \theta)' \big) ' = - \frac{d}{2} \cos \vartheta = - \frac{d}{2} u_1
            \]
            so by symmetry $\widehat{\nabla}^2 \U = - \frac{d}{2} \U$ and $\widehat{\nabla}^2 \phi = - \frac{d}{2} \U \cdot \nabla S$.
    \end{itemize}
    
    Combining everything together, we get:
    \begin{equation*}
        \mathcal{L} \phi < M - \frac{\eta^2 d}{4} \U \cdot \nabla S = - a \phi + b
    \end{equation*}
    for $a = \eta^2 d / 4$ and $b = M + a g_{\mathrm{max}}$. 
\end{proof}

With Lyapunov function in hand the result is immediate:

\begin{theorem} (Geometric ergodicity) \label{geometric ergodicity}
    There exist constants $C > 0$ and $\lambda > 0$, such that for all observables $\mathcal{O}(\boldsymbol{z})$ for which $\vert \mathcal{O}(\Z) \vert < \phi(\Z)$ the expected values $\langle \mathcal{O} \rangle_{\rho} = \int \mathcal{O}(\Z) \rho(\Z) d \Z$ under the MCLMC SDE \eqref{eq:SDE} converge at least exponentially fast:
    \begin{equation}
        \vert \langle \mathcal{O} \rangle_{\rho(t)} - \langle \mathcal{O} \rangle_{\rho_{\infty}} \vert < C e^{- \lambda t} \phi(\Z_0).
    \end{equation}
\end{theorem}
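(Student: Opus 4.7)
The plan is to deduce the theorem from a Harris-type ergodic theorem for continuous-time Markov semigroups (Meyn--Tweedie, or its formulation by Hairer--Mattingly), which delivers geometric convergence in the $\phi$-weighted total-variation norm as soon as two ingredients are available: (i) a Foster--Lyapunov drift inequality $\mathcal{L}\phi \leq -a\phi + b$ with $a,b>0$, and (ii) a minorization condition on the sub-level sets of $\phi$. Ingredient (i) is exactly Lemma \ref{Lyapunov}, so I would quote it and focus the work on (ii).

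To verify (ii), I would exploit that under the periodic boundary conditions assumed in this section $\mathcal{M}$ is compact, so every sub-level set $K_R = \{\Z \in \mathcal{M} : \phi(\Z) \leq R\}$ is compact. Combining H\"ormander's condition (Lemma \ref{hormander}) with controllability (Lemma \ref{path accessibility}) and the smooth-density/positivity statement of Lemma \ref{density}, I would show that for a suitable fixed time $T^* > 0$ the transition density $p_{T^*}(\Z,\Z')$ is jointly continuous and strictly positive on $K_R \times \mathcal{M}$. Taking its infimum over $\Z \in K_R$ and normalizing then produces $\alpha>0$ and a probability measure $\nu$ on $\mathcal{M}$ with
\begin{equation*}
    P_{T^*}(\Z, A) \geq \alpha\, \nu(A) \qquad \forall \Z \in K_R,\; A \in B(\mathcal{M}),
\end{equation*}
which is precisely the minorization needed by Harris.

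With (i) and (ii) in hand, the Harris theorem yields constants $C,\lambda>0$ such that
\begin{equation*}
    \| P_t(\Z_0, \cdot) - \rho_{\infty} \|_{\phi} \leq C\, \phi(\Z_0)\, e^{-\lambda t},
\end{equation*}
where $\|\mu\|_{\phi} = \sup_{|\mathcal{O}| \leq \phi} \bigl| \int \mathcal{O}\, d\mu \bigr|$. Specializing to any observable with $|\mathcal{O}(\Z)| \leq \phi(\Z)$ gives the bound stated in the theorem.

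The main obstacle is the uniformity step in (ii): promoting the pointwise positivity supplied by Lemma \ref{density} to a lower bound that is uniform over the compact set $K_R$. The standard route is to choose $T^*$ strictly larger than $\sup_{\Z \in K_R} T(\Z,\mathcal{U}_f)$ for a suitable reference open set $\mathcal{U}_f$ and to use joint continuity of $p_{T^*}(\cdot,\cdot)$, so that an open neighborhood in Wiener space routes mass from every $\Z \in K_R$ to every point of $\mathcal{M}$. Showing that the controllability time from Lemma \ref{path accessibility} can be bounded uniformly on $K_R$, and that joint continuity of the density really does transfer the positivity across this compact set, are routine in the hypoelliptic diffusion literature but are the points where the argument actually has to be done carefully.
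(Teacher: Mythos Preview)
Your proposal is correct and follows essentially the same route as the paper: invoke a Harris/Meyn--Tweedie type geometric ergodicity theorem, with the drift condition supplied by Lemma~\ref{Lyapunov} and the minorization/positivity supplied by Lemma~\ref{density}. The paper's own proof is in fact just a one-line citation of that external theorem (Theorem~6.2 in \cite{MolecularDynamics}) together with these two lemmas, so you have actually unpacked the argument in more detail than the paper does.
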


\begin{proof}
    This follows directly from theorem 6.2 in \cite{MolecularDynamics}, given the Lyapunov function from Lemma \ref{Lyapunov} and smooth positive density from Lemma \ref{density}.
\end{proof}

\section{Applications} \label{phi4}
To show the promise of MCLMC as a 
general purpose MCMC tool, we apply it to 
the scalar $\phi^4$ field theory in two Euclidean dimensions
and to benchmark hierarchical Bayesian inference problems.

In the discrete scheme \eqref{eq: euler} it is not necessary to have the Brownian motion on the sphere as a stochastic update in order to have $\rho_{\infty}$ as a stationary distribution. In fact, any discrete stochastic process on the sphere, which has the uniform distribution as the stationary distribution and acts as an identity on the $\x$-space will do. In the practical algorithm, we therefore avoid generating the complicated Brownian motion on the sphere and instead use the generative process $\U_{(n+1) \epsilon} = (\U_{n \epsilon} + \nu \boldsymbol{r}) / \vert \U_{n \epsilon} + \nu \boldsymbol{r} \vert $, where $\boldsymbol{r}$ is a random draw from the standard normal distribution and $\nu$ is a parameter with the same role as $\eta$. We tune the parameter $\eta$ by estimating the effective sample size (ESS) \citep{MCHMC}. 
We approximate the deterministic flow $\varphi_t$ with the Minimal Norm integrator \citep{MinimalNorm, MCHMC} and tune the step size by targeting a predefined energy error variance per dimension, as in \citep{MCHMC}. 
For $\phi^4$ field theory, the tuning of the step size and $\eta$ is done at each $\lambda$ level separately and is included in the sampling cost. It amounts to around $10 \%$ of the sampling time.

\subsection{Lattice $\phi^4$ field theory}
This is one of the simplest 
non-trivial lattice field theory 
examples.
The scalar field in a continuum is a scalar function $\phi(x, y)$ on the plane with the area $V$. The probability density on the field configuration space is proportional to $e^{-S[\phi]}$, where the action is
\begin{equation} 
    S[\phi(x, y)] = \int \big( -\phi \, \partial^2 \phi + m^2 \phi^2 + \lambda \phi^4 \big) d x d y .
\end{equation}
The squared mass $m^2 < 0$ and the quartic coupling $\lambda > 0$ are the parameters of the theory. The system is interesting as it exhibits spontaneous symmetry breaking, and belongs 
to the same universality class
as the Ising model. The action is symmetric to the global field flip symmetry $\phi \xrightarrow[]{} -\phi$. However, at small $\lambda$, the typical set of field configurations splits in two symmetric components, each with non-zero order parameter $\langle \bar{\phi} \rangle$, where $\bar{\phi} = \frac{1}{V} \int \phi(x, y)  dx dy$ is the spatially averaged field. The mixing between the two components is highly unlikely, and so even a small perturbation can cause the system to acquire non-zero order parameter. One such perturbation is a small external field $h$, which amounts to the additional term $- h \int \phi(x, y) dx dy$ in the action. The susceptibility of the order parameter to an external field is defined as
\begin{equation} 
    \chi = V \frac{\partial \bar{\phi}}{\partial h} \vert_{h = 0} = \lim_{h \to 0^+} V \langle\big(\bar{\phi} - \langle \bar{\phi} \rangle \big)^2\rangle,
\end{equation}
which 
diverges at the critical point, where the second order phase transition occurs.

\begin{figure}[t]
    \hspace{-1.1cm}\includegraphics[scale = 0.67]{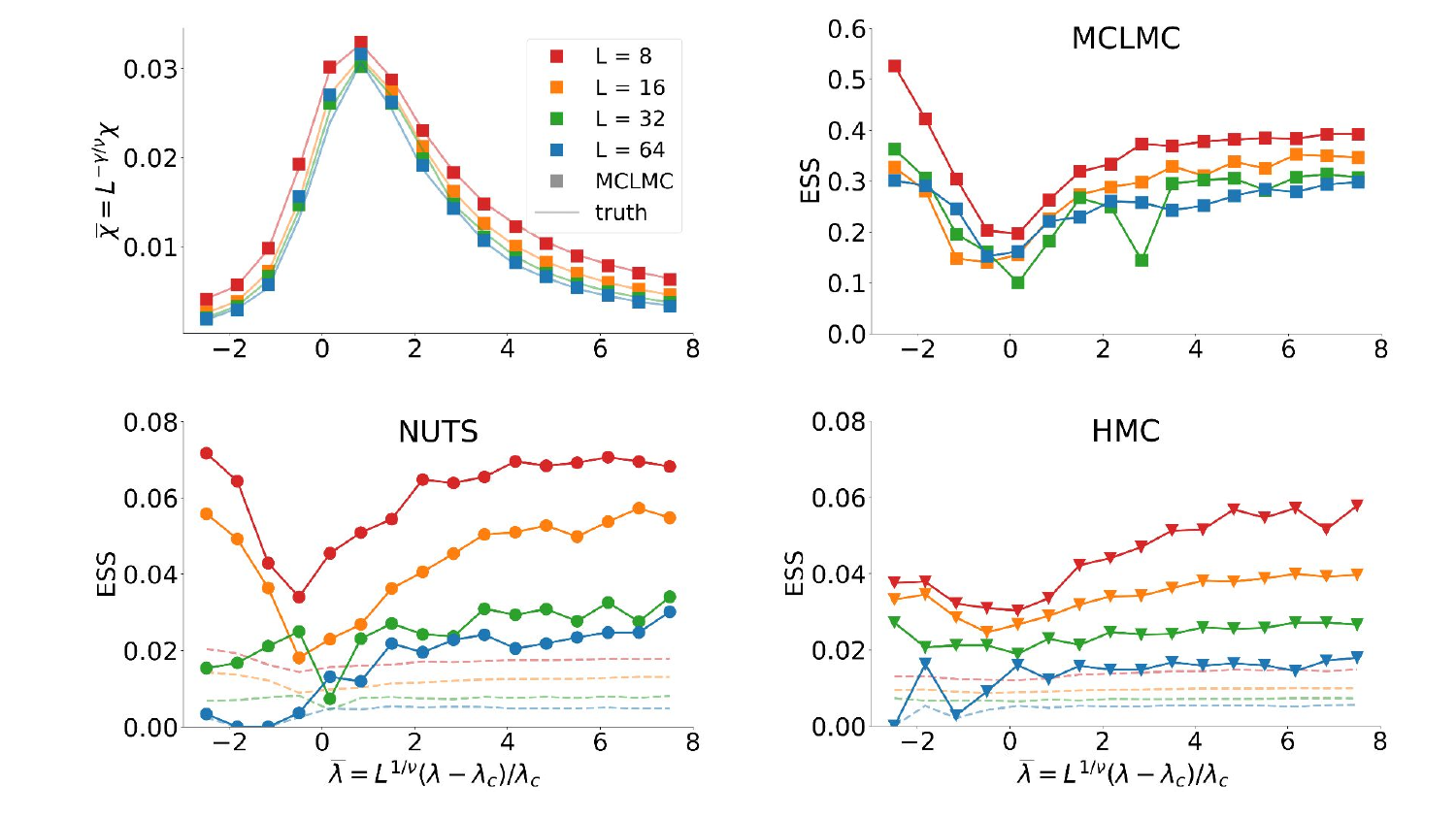}
    \caption{Top left: susceptibility in the vicinity of the phase transition. We follow \cite{gerdes} and rescale the susceptibility and the quartic coupling by the Ising model critical exponents $\nu = 1$, $\gamma = 7/4$ and the critical coupling $\lambda_C = 4.25$ from \cite{phi4masterthesis}. The rescaling removes most of the lattice size dependence \citep{FiniteLatticeSize}. MCLMC agrees with the truth (obtained by a very long NUTS run). 
    Top right: effective sample size (ESS) per action gradient evaluation for MCLMC. Higher is better. MCLMC tuning cost is included. 
    Bottom: same for NUTS and HMC. The dotted lines are the corresponding results if the tuning cost of 500 warm-up samples is taken into account. 
    }
    \label{fig}
\end{figure}

The $\phi^4$ theory does not admit analytic solutions due to the quartic interaction term. A standard approach is to discretize the field on a lattice and make the lattice spacing as fine as possible \citep{LQCDGattringer}. The field is then specified by a vector of field values on a lattice $\phi_{i j}$ for $i, j = 1, 2, \ldots L$. The dimensionality of the configuration space is $d = L^2$. We will impose periodic boundary conditions, such that $\phi_{i, L+1} = \phi_{i 1}$ and $\phi_{L+1, j} = \phi_{1 j}$. The $h = 0$ lattice action is \citep{phi4masterthesis}
\begin{equation} 
    S_{\mathrm{lat}}[\phi] = \sum_{i, j= 1}^L 2 \phi_{ij} \big(2 \phi_{i j} - \phi_{i+1, j} - \phi_{i, j+1}\big) + m^2 \phi_{i j}^2  + \lambda \phi_{i j}^4.
\end{equation}
As common in the literature \citep{rezende, phi4notebook, gerdes}, we will fix $m^2 = -4$ (which removes the diagonal terms $\phi_{ij}^2$ in the action) and study the susceptibility as a function of $\lambda$.
The susceptibility estimator is 
 $   \chi = L^2 \langle \big( \bar{\phi} - \langle \bar{\phi} \rangle\big)^2\rangle$, where
 $ \bar{\phi} = \frac{1}{L^2} \sum_{i j} \phi_{i j}$,
and the expectation $\langle \cdot \rangle$ is over the samples \citep{gerdes}.

A measure of the efficiency of sampling performance is the number of action gradient calls needed to 
have an independent sample. Often we wish to achieve some accuracy of expected second moments \citep{MCHMC}. We define the squared bias $b_2^2$ as the relative error of the expected second moments in the Fourier basis,
 $   b_2^2 = \frac{1}{L^2} \sum_{k, l = 1}^L \big( 1-\frac{\langle \vert \widetilde{\phi}_{k l} \vert^2 \rangle_{\text{sampler}}}{\langle \vert \widetilde{\phi}_{k l} \vert^2 \rangle_{\text{truth}}}\big)^2$,
where $\widetilde{\phi}$ is the scalar field in the Fourier basis,
 $   \widetilde{\phi}_{k l}= \frac{1}{\sqrt{L^2}}\sum_{n m = 1}^L \phi_{n m} e^{-2 \pi i (k n + l m) / L}$.
In analogy with Gaussian statistics, we define the effective sample size to be $2/b_2^2$. Here, we report the effective sample size per action gradient evaluation at the instant when $b_2 = 0.1$, which corresponds to 200 effective samples. The number we report is ESS per action gradient evaluation, such that its inverse gives the number of gradients needed to achieve one independent sample.

In Figure \ref{fig} we compare MCLMC to standard HMC \citep{HMCDuane} and to a self-tuned HMC variant NUTS \citep{NUTS}, both implemented in NumPyro \citep{NummPyro}.
For HMC, we find that the optimal number of 
gradient calls between momentum 
resamplings to be 20, 30, 40 and 50 for lattice sizes $L = \,$ 8, 16, 32 and 64.
The step size is determined with the dual averaging algorithm, targeting acceptance rate of 0.8 (NumPyro default), which adds considerably to the overall cost (Figure \ref{fig}).

For all samplers, we use an annealing scheme, starting at high $\lambda$ and using the final state of the sampler as an initial condition at the next lowest $\lambda$ level. The initial condition at the highest $\lambda$ level is a random draw from the standard normal distribution on each lattice site.
There is
a near perfect agreement between a very long NUTS run (denoted as truth) and 
MCLMC in terms of susceptibility.
Above the phase transition ($\bar{\lambda} \gtrsim 1$), ESS for MCLMC and HMC
is relatively constant with 
$\bar{\lambda}$. ESS for NUTS and HMC 
scales with $L$ as $d^{-1/4}=L^{-1/2}$, as 
expected from adjusted HMC \citep{NealHandbook}. At the 
phase transition, NUTS and HMC 
suffer from the critical slowing 
down, resulting in lower ESS. 
In contrast, ESS for
MCLMC is almost independent of $\bar{\lambda}$ and $L$. 
Overall, MCLMC outperforms HMC and NUTS by 10-100 at $L=64$ if HMC and NUTS tuning is not included, and by at least 40 if tuning is included (MCLMC auto-tuning is cheap and included in the cost, and we use the recommended 500 warm up samples for tuning of NUTS and HMC). 
We thus expect that for $d=10^8$, typical of state-of-the-art lattice quantum chromodynamics calculations, the advantage of MCLMC over HMC and NUTS will be 2–3 orders of magnitude due to $d^{1/4}$ scaling.
MCLMC also significantly outperforms recently proposed Normalizing Flow (NF) based samplers \citep{rezende,gerdes}. NFs scale poorly with dimensionality, and the training time increases by about one order of magnitude for each doubling of $L$, e.g. of order 10 hours for $L=32$ to reach 90\% acceptance, and 
60 hours to reach 60\% acceptance
at $L=64$ \citep{gerdes}. In contrast, 
the wall-clock time of MCLMC at $L=64$ on a GPU is a fraction of a second, while even at $L=8096$ (completely out of reach of current NF based samplers) it is only 15 seconds. 

\subsection{Hierarchical Bayesian inference}

We here test MCLMC on two Bayesian inference problems, taken from the Inference gym \citep{inferencegym}.
Brownian Motion 
is a 32 dimensional problem, where Brownian motion with unknown innovation noise is fitted to the noisy and partially missing data. 
Item Response theory 
is a 501 dimensional hierarchical problem where students' ability is inferred, given their test results.
We follow \cite{meads} and define the error of the expectation value of $f(\x)$ as
$b^2_{f}= (\langle f \rangle_{\mathrm{sampler}} - \langle f \rangle)^2 / \mathrm{Var}[f]$,
where ground truth expectation values $\langle f \rangle$ and $\mathrm{Var}[f] = \langle (f - \langle f \rangle)^2\rangle$ are computed by very long NUTS runs. 
We will measure samplers' efficiency as the number of gradient evaluations needed to achieve low average second moment error $b^2 \equiv \sum_{i = 1}^{d} b^2_{x_i^2} / d < 0.01$, where averaging was performed over parameters of the model and we also average over 128 independent chains. The results are shown in Table \ref{table}.

\begin{table}[hbtp]
\floatconts 
  {tab:example-booktabs}
  {\caption{Number of gradient evaluations to low bias, lower value is better. MCLMC outperforms NUTS by more than a factor of three in both examples.}}
  {\begin{tabular}{lll} 
  \toprule
  \bfseries problem & \bfseries MCLMC & \bfseries NUTS \\
  \midrule
  Brownian Motion & {\bf 2032} & 6369\\
  Item response theory & {\bf 3312} & 11140\\
  \bottomrule
  \end{tabular}} \label{table}
\end{table}

\section{Conclusions}
We introduced an energy conserving 
stochastic Langevin process in the continuous time limit that 
has no damping, derived the 
corresponding Fokker-Planck equation. Its equilibrium solution is microcanonical in the total energy, yet its space distribution equals the desired target 
distribution given by the action, showing that 
the framework of \cite{Ma15} is not 
a complete recipe of all SDEs whose equilibrium solution is the target density. We have also proven ergodicity 
demonstrating that the stationary solution 
is unique, and geometric ergodicity,  
demonstrating that the convergence to the stationary distribution is exponentially 
fast. 

MCLMC is also of 
practical significance: we show
it vastly outperforms the state-of-the-art HMC on 
a lattice $\phi^4$ model. In lattice quantum chromodynamics \citep{LQCDGattringer, LQCDDegrand} the
computational demands are 
particularly intensive, and numerical results presented here 
suggest that MCLMC could offer 
significant improvements over 
HMC in the setting of high dimensional
models. 
We have also demonstrated that MCLMC offers significant (factor of three) improvements over the state-of-the-art algorithm NUTS (a variant of HMC) on hierarchical Bayesian inference problems,
suggesting MCLMC outperforms HMC/NUTS 
over a wide range of problems.


\acks{
We thank Julian Newman for proving lemmas \ref{density} and \ref{Newman lemma} and Qijia Jiang for useful discussions.
This material is based upon work supported in part by the Heising-Simons Foundation grant 2021-3282 and by the U.S. Department of Energy, Office of Science, Office of Advanced Scientific Computing Research under Contract No. DE-AC02-05CH11231 at Lawrence Berkeley National Laboratory to enable research for Data-intensive Machine Learning and Analysis.
}

\bibliography{citations,citationsErgodicity,citationsLQCD}

\appendix

\section{}

We here include a supplementary lemma, needed in the proof of lemma \ref{density}:

\begin{lemma} \label{Newman lemma}
    Let $M$ be a $C^1$ manifold. Let $(I, \mathcal{I}, \mu)$ and $(J, \mathcal{J}, \nu)$ be probability spaces, and over these probability spaces respectively, let $(\Phi_\alpha)_{\alpha\in I}$ and $(\Psi_\beta)_{\beta\in J}$ be random $C^1$ self-embeddings of $M$. Fix $x_1, x_2 \in M$, let $p$ be the law over $\mu$ of $\alpha \mapsto \Phi_\alpha(x_1)$, and let $\tilde{p}$ be the law over $\mu \otimes \nu$ of $(\alpha, \beta) \mapsto \Psi_\beta(\Phi_\alpha(x_1))$. Suppose there exists $m > 0$ and an open subset $U$ of a chart on $M$ such that
\begin{itemize}
    \item for every $A\in\mathcal{B}(U)$, $p(A) \geq m \operatorname{Leb}(A)$;
    \item $\nu(\beta\in J \vert x_2 \in \Psi_\beta(U)) > 0$.
\end{itemize}
Then there exists $\tilde{m} > 0$ and a neighborhood $\tilde{U}$ of $x_2$ contained in a chart on $M$ such that for every $A\in\mathcal{B}(\tilde{U})$, $\tilde{p}(A) \geq \tilde{m} \operatorname{Leb}(A)$.
\end{lemma}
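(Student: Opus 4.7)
The plan is to work in local charts around $x_2$ and around $y_\beta := \Psi_\beta^{-1}(x_2)$, which is well-defined in $U$ for $\beta$ in the positive-$\nu$-measure set $J_0 := \{\beta \in J : x_2 \in \Psi_\beta(U)\}$ since each $\Psi_\beta$ is injective on $U$. The strategy is to invert $\Psi_\beta$ on a controlled neighborhood of $x_2$, pull the density bound on $p$ back through this inverse, and then integrate over $\beta$ using the Fubini identity
\begin{equation*}
\tilde p(A) \;=\; \int_J p\bigl(\Psi_\beta^{-1}(A)\bigr)\, d\nu(\beta);
\end{equation*}
it will suffice to lower bound the integrand uniformly on a positive-$\nu$-measure subset of $J_0$.

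For each $\beta \in J_0$, $D\Psi_\beta(y_\beta)$ is invertible, so the inverse function theorem supplies a radius $r(\beta)>0$ such that $B(x_2, r(\beta))$ (in a fixed chart at $x_2$) is contained in $\Psi_\beta(U)$, its preimage under $\Psi_\beta$ lies in $U$, and the local inverse is a $C^1$ diffeomorphism with Jacobian bounded below by some $1/C(\beta)>0$ throughout $B(x_2, r(\beta))$. The nested family
\begin{equation*}
J_k := \bigl\{\beta \in J_0 : r(\beta) \geq 1/k \text{ and } C(\beta) \leq k\bigr\}
\end{equation*}
exhausts $J_0$, so there is some $k^*$ with $\nu(J_{k^*}) > 0$. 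I would then set $\tilde U := B(x_2, 1/k^*)$. For any Borel $A \subset \tilde U$ and each $\beta \in J_{k^*}$, the change-of-variables formula gives $\mathrm{Leb}\bigl(\Psi_\beta^{-1}(A)\bigr) \geq \mathrm{Leb}(A)/k^*$; combining with the hypothesis $p \geq m\,\mathrm{Leb}$ on $U$ and integrating over $J_{k^*}$ yields
\begin{equation*}
\tilde p(A) \;\geq\; \frac{m\,\nu(J_{k^*})}{k^*}\, \mathrm{Leb}(A),
\end{equation*}
so $\tilde m := m\,\nu(J_{k^*})/k^*$ works.

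The main technical obstacle I anticipate is measurability: one must check that $J_k \in \mathcal{J}$, so that $\nu(J_{k^*}) > 0$ is meaningful, and that $\beta \mapsto p(\Psi_\beta^{-1}(A))$ is $\mathcal{J}$-measurable, so that Fubini applies. Both follow from the measurable dependence of $\Psi_\beta$ on $\beta$ that is built into the notion of a "random $C^1$ self-embedding," combined with separability of the local $C^1$-topology on small coordinate balls, but writing the argument out cleanly is the most delicate step of the proof.
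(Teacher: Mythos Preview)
Your approach is essentially the same as the paper's: both restrict to a $\nu$-positive subset of $\{\beta : x_2 \in \Psi_\beta(U)\}$ on which $\Psi_\beta^{-1}$ maps a fixed neighborhood $\tilde U$ of $x_2$ into $U$ with Jacobian uniformly bounded below, then apply Fubini and the change-of-variables bound to get $\tilde p(A) \geq m\,\nu(J')\,r\,\operatorname{Leb}(A)$. The paper simply asserts the existence of such a triple $(J',\tilde U,r)$ in one line, whereas you supply the explicit exhaustion $J_k = \{r(\beta)\geq 1/k,\ C(\beta)\leq k\}$ to produce it and correctly flag the measurability of $J_k$ and of $\beta\mapsto p(\Psi_\beta^{-1}(A))$ as the only nontrivial bookkeeping; the paper does not comment on this point.
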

\begin{proof}
One can find a $\nu$-positive measure set $J' \subset J$, a neighborhood $\tilde{U}$ of $x_2$ contained in a chart on $M$, and a value $r > 0$, such that for all $\beta\in J'$ and $x\in\tilde{U}$, we have $\Psi^{-1}_\beta(x)\in U$ and $|\operatorname{det}(D(\Psi^{-1}_\beta)(x))|\geq r$. Now take any $A\in\mathcal{B}(\tilde{U})$ and let
\begin{equation*}
    E=\{(\alpha, \beta)\in I\times J' \vert \Psi_\beta(\Phi_\alpha(x_1))\in A\}    
\end{equation*}
then
\begin{align*}
\tilde{p}(A) &\geq (\mu\otimes\nu)(E) = \int_{J'}\mu(\alpha\in I \vert \Psi_\beta(\Phi_\alpha(x_1))\in A)\,\nu(d\beta) 
= \int_{J'}p(\Psi^{-1}_\beta(A))\,\nu(d\beta)  \\
&\geq m\int_{J'}\operatorname{Leb}(\Psi^{-1}_\beta(A))\,\nu(d\beta) 
\geq m\nu(J')r = \tilde{m}\operatorname{Leb}(A).
\end{align*}
Therefore, $\tilde{p}(A) \geq \tilde{m} \operatorname{Leb}(A). \quad \blacksquare$
\end{proof}

\end{document}